\def\lln{\ell_n}
\def\llnm{\ell_n^m}
\def\lan{\bar\ell_n}
\def\lanm{\bar\ell_n^m}
\def\rem{r^m}
\def\renm{r_n^m}
\def\Z{C}
\def\mX{\mathcal{X}}
\def\mN{\mathcal{N}}
\def\mY{\mathcal{Y}}
\def\tht{\th_\star}
\def\thn{\hat\th_n}
\def\thnm{\hat\th_n^m}
\def\grad{\nabla}
\def\hess{\nabla^2}
\def\app{_{\rm approx.}}
\def\topr{\to_{\rm p}}
\def\tod{\to_{\rm d}}
\title{Asymptotics of Monte Carlo maximum likelihood estimators
\footnote{Work partially supported by Polish National Science Center No. N N201 608 740}}
\author{
B{\l}a\.{z}ej Miasojedow
\footnote{Institute of Applied Mathematics and Mechanics,
University of Warsaw, Banacha 2, 02-097 Warsaw, Poland, {\tt  	B.Miasojedow@mimuw.edu.pl}
},
Wojciech Niemiro
\footnote{Faculty of Mathematics and Computer Science,
Nicolaus Copernicus University,  Chopina 12/18, 87-100, Toru\'n, Poland {\tt wniem@mat.umk.torun.pl}
and Institute of Applied Mathematics and Mechanics,
University of Warsaw, Banacha 2, 02-097 Warsaw, Poland, {\tt wniem@mimuw.edu.pl}
},\\
Jan Palczewski
\footnote{School of Mathematics, University of Leeds, Woodhouse Lane,  Leeds LS2 9JT , UK, {\tt J.Palczewski@leeds.ac.uk}
}
and Wojciech Rejchel
\footnote{corresponding author, Faculty of Mathematics and Computer Science,
Nicolaus Copernicus University,  Chopina 12/18, 87-100, Toru\'n, Poland, +48 566112943,
{\tt wrejchel@gmail.com}}
}
\date{}
\begin{document}

\maketitle

\begin{abstract}
We describe Monte Carlo approximation to the maximum likelihood estimator in  models
with intractable norming constants and explanatory variables.
We consider both sources of randomness (due to the initial sample and to Monte Carlo simulations) and prove asymptotical normality of  the estimator.

\end{abstract}
\bigskip\goodbreak

 {\bf Keywords:} asymptotic statistics, empirical process, importance sampling, 
maximum likelihood estimation, Monte Carlo method

\bigskip\goodbreak
\section{Introduction}\label{Sec:Introduction}

Maximum likelihood (ML) is a well-known and often used method in estimation
of parameters in statistical models. However, for many complex models exact calculation of such estimators is very difficult or
impossible. Such problems arise if considered densities are known only up to intractable
norming constants, for instance in Markov random fields or spatial statistics. The wide range of applications
of models with unknown norming constants is discussed e.g.\  in \cite{MPRB2006}.
Methods proposed to overcome the problems with computing ML estimates in such models include, among others,
maximum pseudolikelihood (MPL) \cite{Besag1974} or
Monte Carlo maximum likelihood (MCML) \cite{Cappe2002, GeyerThom1992, Geyer1994, GeyerSung2007}.
MPL estimators are easy to compute but not efficient. This
is demonstrated e.g.\ in \cite{WuHu1997} for an important autologistic spatial model via a simulation study.
Comparison of MLP or ,,coding method'' with MCML is also discussed in \cite{HuWu1998}.
In our paper we focus on MCML.

In influential papers \cite{GeyerThom1992, Geyer1994} the authors prove consistency and asymptotic normality of
MCML estimators under the assumption that the initial sample is fixed, and only the Monte Carlo sample size tends to infinity.
Both sources of randomness (one due to the initial sample and the other due to Monte Carlo simulations) are considered in
\cite{Cappe2002, GeyerSung2007, Imput2010}. Authors of the first mentioned paper apply the general importance sampling recipe. They show that for their scheme of simulations,
the Monte Carlo sample size has to grow exponentially fast to ensure consistency of the estimator.
As the remedy for this problem they propose to use a preliminary estimator which is consistent. Another
possibility to overcome this problem is proposed in \cite{GeyerSung2007}.
The log-likelihood is first decomposed into independent summands and then importance sampling is applied.
Papers \cite{Cappe2002, GeyerSung2007}   describe asymptotic properties of MCML estimators for models with missing data.
In our paper we consider models with intractable norming constants and explanatory variables. We apply argumentation similar to
\cite{GeyerSung2007} in our setting.

We consider a parametric model with covariates
\begin{equation}\nonumber
p(y|x,\th)=\frac{1}{\Z(x,\th)}f(y|x,\th),
\end{equation}
where $y \in \mY \subset \mathbb{R}^d$ is a response variable, $x \in \mX \subset
\mathbb{R}^l$ is a covariate or ``explanatory'' variable (random or deterministic), $\th \in \mathbb{R} ^p$ is a parameter
describing the relation between $y$ and $x$. The norming constant,
\begin{equation}\nonumber
\Z(x,\th) = \int f(y|x,\th) dy,
\end{equation}
is difficult or intractable.

Assume that the data consist of $n$ independent observations $(Y_1,X_1),\ldots,$ $(Y_n,X_n).$
If we regard covariates as random, then we assume that these pairs form an i.i.d.\  sample from a joint distribution
with a density $g(y,x)$. Alternatively, $x_i$ can be regarded as deterministic and then we assume that random variable
$Y_i$ has a probability distribution $g_i$ which depends on $x_i$. Both cases can be analysed very similarly.
For simplicity we focus attention on the model with random covariates.
It is not  necessary to assume that $g(y | x)=p(y|x,\th_0)$ for some $\th_0$. The case when no such $\th_0$ exists, i.e.\
the model is misspecified, makes the considerations only slightly more difficult. Thus,
let us consider the following log-likelihood
\begin{eqnarray}\nonumber \label{eq:likn}
 \lln(\th) &=&\log p(Y_1,\ldots,Y_n|X_1,\ldots,X_n,\th)\\
           &=&\sum_{i=1}^{n} \log  f(Y_i|X_i,\th)- \sum_{i=1}^{n} \log {\Z(X_i,\th)}  .
\end{eqnarray}
The first term in (\ref{eq:likn}) is easy to compute while the second
one is approximated by Monte Carlo (MC). Let $h(y)$ be an importance sampling (instrumental)
distribution and  note that
\begin{equation}\nonumber
{\Z(x,\th)}=\int f(y|x,\th)\d y=\int \frac{f(y|x,\th)}{h(y)}h(y)\d y=\Ex_{Y\sim h}\frac{f(Y|x,\th)}{h(Y)}.
\end{equation}
Thus, an MC approximation of the log-likelihood $\lln(\th)$ is
\begin{equation} \label{eq:fr}
\llnm(\th) =\sum_{i=1}^{n }\log f(Y_i|X_i, \th)-\sum_{i=1}^{n}\log C_m(X_i, \th) ,
\end{equation}
where
$$C_m(x,\th) = \frac{1}{m}\sum_{k=1}^{m}\frac{f(Y^k|x, \th)}{h(Y^k)}$$
and $Y^1,\ldots, Y^m$ is a  sample drawn form $h$.

Let us note that the general Monte Carlo recipe 
can also lead to approximation schemes different from \eqref{eq:fr}. For instance, we could generate $n$ independent
MC samples instead of one, i.e.\  $Y_i^1,\ldots,Y_i^m \sim h_i, i=1,\ldots,n$ and use
$i$th sample to approximate $\Z (x_i, \theta).$ Using this scenario one can obtain
estimators with better convergence rates, but at the cost of increased  computational
complexity. Another scheme, proposed in \cite{Cappe2002}, approximates the log-likelihood by
\begin{equation}\label{Cappe}
\sum_{i=1}^{n }\log f(Y_i|X_i, \th)-\log \frac{1}{m}\sum_{k=1}^{m}\prod_{i=1}^{n}\frac{f(Y_i^k|X_i, \th)}{h_i(Y_i^k).
}\end{equation}
{However, this scheme leads to estimators with unsatisfactory asymptotics unless a preliminary estimator is used.
Thus, we focus our attention only on $(\ref{eq:fr}).$

Let $\thn$ be a maximizer of $\lln(\th)$ (a genuine maximum likelihood estimator).
It is well-known that under some regularity assumptions \cite{Pollard1984, vaart1998}
\begin{equation}\nonumber 
\thn \sim\app \mathcal{N} \left(\tht,\frac{1}{n}D^{-1}VD^{-1}\right),
\end{equation}
where $\tht$ is a  maximizer of $\Ex_{(Y,X)\sim g} \log p(Y|X,\th)$,
i.e.\ Kullback-Leibler projection, $D=\Ex_{(Y,X)\sim g} \hess  \log p(Y|X,\tht)$ and $V=\VAR_{(Y,X)\sim g}\grad \log p(Y|X,\tht)$.
Symbols $\grad$ and $\hess$ denote derivatives with respect to $\th$ and $\VAR$ stands for the variance-covariance matrix.
In Theorem \ref{with_cov} we will prove that the  maximizer of (\ref{eq:fr}), denoted by  $\thnm$, satisfies
\begin{equation}\label{asymp}
\thnm
             \sim\app \mathcal{N} \left(\tht,D^{-1}\left(\frac{V}{n}+\frac{W}{m}\right)D^{-1}\right),
\end{equation}
where the matrix $W$ will be given later. Formula (\ref{asymp}) means that the estimator
$\thnm$ behaves like a normal vector with the mean $\tht$ when both the initial sample size $n$ and the
Monte Carlo sample size $m$ are large. Note that the first component of
the asymptotic variance in (\ref{asymp}),  $D^{-1}VD^{-1}/n$,  is the same as the asymptotic variance
of the maximum likelihood  estimator $\thn.$ The second component, $D^{-1}WD^{-1}/m$, is due to Monte Carlo randomness.
Furthermore, if $m$ is large, then asymptotic behaviour of $\thnm$ and $\thn$ is similar.
If the model is correctly specified, that is $g(y|x) = p(y|x , \th_0)$ for some $\th_0,$
then $\tht = \th_0$ and $D=-V$ (under standard assumptions on passing the derivative
under the integral sign).

The choice of the instrumental distribution $h$ affects $W$ and thus the asymptotic efficiency of MCML.
In \cite[Equation (2.17)]{MNPR2014b} a formula for optimal $h$ is derived (this $h$ minimizes the trace of $W$ in a model without covariates). 
This result may be of some theoretical interest but has a limited practical value, because the optimal $h$ can be very difficult to sample from.
On the other hand, a more practical approach, suggested by several authors, e.g.\  \cite{Cappe2002, Imput2010}, is
to select some distribution in the underlying parametric family, i.e.\ to put
\begin{equation}\nonumber
h(y)=p(y|\psi)=\frac{1}{\Z(\psi)}f(y|\psi), 
\end{equation}
for some fixed $\psi\in\Rl^p$ (here we restrict attention to models without covariates). It is natural to guess that a ``good choice'' of $\psi$ should be close
to the target, $\tht$. Since $\tht$ is unknown, one can use a preliminary estimator. Such a choice of $h$ 
is recommended in \cite{Cappe2002, Imput2010}. In the first of the cited papers, theoretical results are given
which justify using a consistent preliminary estimate of $\tht$ as $\psi$, compare \cite[Theorems 4 and 7]{Cappe2002}. 
However, the results are about sampling scheme \eqref{Cappe}. In \cite{Imput2010}, sampling scheme $(\ref{eq:fr})$ is considered and the
choice of $\psi$ near $\tht$ is recommended on heuristical grounds. In fact the intuition behind this choice turns out to be wrong, as
demonstrated by the following toy example.

\def\yn{\bar Y_n}
\def\ym{\bar Y^m}

\begin{exam} Let $\Y=\{0,1\}$ and $f(y|\th)=\e^{\th y}$ for  $\th\in\Rl$. Of course, the norming constant
$C(\th)=1+\e^\th$ is easy and there is no need to apply MCML, but the simplicity of this model will allow us 
to clearly illustrate our point. Assume we have an i.i.d.\ sample $Y_1,\ldots,Y_n$ from $f(\cdot|\tht)/C(\tht)$.
The MLE is $\thn=\log(\yn/(1-\yn))$, where $\yn=n^{-1}\sum_{i=1}^n Y_i$. Now suppose that we use MCML approximation 
\eqref{eq:fr} with $h(y)=f(y|\psi)/{\Z(\psi)}$. It can be easily shown that the asymptotic variance $W$ (now a scalar) 
is minimum for $\psi_\star=0$ -- and not for $\psi=\tht$! The following direct derivation explains this fact.
The formula \eqref{eq:fr} now assumes the form 
\begin{equation}\nonumber
 \llnm(\th)=n\th\yn-n\log\left(\frac{1}{m}\sum_{k=1}^m\e^{(\th-\psi)Y^k}\right)-n\log\Z(\psi).
\end{equation}
On noting that 
\begin{equation}\nonumber
 \frac{1}{m}\sum_{k=1}^mY^k\e^{(\th-\psi)Y^k}=\ym\e^{\th-\psi},\quad 
 \frac{1}{m}\sum_{k=1}^m\e^{(\th-\psi)Y^k}=\ym\e^{\th-\psi}+(1-\ym)
\end{equation}
we see that the equation $\grad \llnm(\th)=0$ is equivalent to 
 \begin{equation}\nonumber
  \yn-\dfrac{\ym\e^{\th-\psi}}{\ym\e^{\th-\psi}+(1-\ym)}=0.
 \end{equation}
 After elementary computations we obtain that the solution $\thnm$ of this equation is
 \begin{equation}\nonumber
  \thnm=\log\dfrac{\yn}{1-\yn}+\psi-\log\dfrac{\ym}{1-\ym}.
 \end{equation}
Let us rewrite this expression as follows:
\begin{equation}\nonumber
  \thnm=\thn+\psi-\hat\psi^m,
 \end{equation}
where $\hat\psi^m$ is an ML estimate of $\psi$ based on the MC sample. It is clear that 
$\sqrt{m} (\psi-\hat\psi^m)\to\N(0,\e^{-\psi}(1+ \e^\psi)^2)$, independently of $\th$. The
asymptotic variance of the MC error is minimum for $\psi_\star=0$. The overall error of MCML 
is the sum of two independent terms $(\thn-\tht)+(\hat\psi^m-\psi_\star)$.
\end{exam}

Asymptotic properties of MCML estimator (consistency, rates of convergence, asymptotic normality) can be obtained using
standard statistical methods from the empirical processes theory \cite{Pollard1984, vaart1998}. However, these tools should be
adjusted to the model with double randomness when both sample sizes $n$ and $m$ tend to infinity simultaneously.
This adaptation makes our proofs very arduous and technical despite the fact that the main ideas are rather clear.
Therefore, to make the paper more transparent we present only the proof of asymptotic normality. This result
is the most important from a practical point of view. Moreover, the argumentation used in proving this property
illustrates well how to adapt standard methods to the double randomness setup. Similar
adaptation can be used to obtain consistency and the rate of convergence of the MCML estimator.
Since the proof of (\ref{asymp}) for the model with covariates is rather complicated, we begin in Section 2 with a model without covariates. It is extended to the general case in Section 3.

As we have already mentioned, related results on MCML for missing data models can be found in \cite{Cappe2002, GeyerSung2007}.
In particular, our theorems are of similar form as those in \cite{GeyerSung2007}. However,
models with intractable norming constants and observable covariates,  considered in our paper, are
more difficult to analyse. 
Let us also mention that for the missing data models there exists another powerful tool for computing
maximum likelihood estimates, namely the EM (expectation-maximization) algorithm \cite{DempsterLairdRubin1977}.
The expectation step (E-step) can be implemented using MC computations resulting in Monte Carlo EM (MCEM) algorithm which
has been examined in several papers \cite{WeiTanner1990, LevineCasella2001, FortMoulines2003}. MCEM cannot be applied to models with intractable norming constants and observable covariates. This points to particular importance of MCML in this setting and motivates examination of its behaviour.

\section{Model without covariates}

First, we consider a model without covariates
\begin{equation}\nonumber
p(y|\th)=\frac{1}{\Z(\th)}f(y|\th)
\end{equation}
with an intractable norming constant $\Z(\th ) = \int f(y|\th) dy.$
Assume we have an i.i.d.\  sample $Y_1,\ldots,Y_n\sim g(y)$.
Similarly to the general case, we allow for misspecification of the model, i.e.\
we do not assume $g(y)=p(y|\th_0)$ for some $\th_0$. In what follows, $\tht$ is a maximizer of $\Ex_{Y\sim g} \log p(Y|\th)$, i.e.\
the Kullback-Leibler projection. The MC approximation (\ref{eq:fr}) multiplied by $\frac{1}{n}$ is denoted by
\begin{equation}\nonumber 
\begin{split}
\lanm(\th) &=\frac{1}{n}\sum_{i=1}^{n }\log f(Y_i|\th)-\log \frac{1}{m}\sum_{k=1}^{m}\frac{f(Y^k|\th)}{h(Y^k)}
           =\lan(\th) - \rem(\th),
\end{split}
\end{equation}
where
\begin{eqnarray*}
\lan(\th)  &=& \frac{1}{n} \sum_{i=1}^{n}\left[ \log  f(Y_i|\th)-\log {\Z(\th)}\right],  \\
\rem(\th) &=&\log \frac{1}{m}\sum_{k=1}^{m}\frac{f(Y^k|\th)}{h(Y^k)}-\log {\Z(\th).}
\end{eqnarray*}
Now we can state the main result of this section.

\begin{thm}\label{without_cov}
For some $\delta >0$ let $U= \{\th : |\th - \tht| \leq \delta\} $ be a
neighbourhood of $\tht.$ If the following assumptions are satisfied:
\begin{enumerate}
\item second partial derivatives of $f(y |\th)$ with respect to $\th$ exist and are continuous for all $y$, and can be passed under the integral sign in
$\int f(y| \th) dy,$
\item $\sqrt{\min(n,m)} \left( \thnm - \tht \right) = O_p (1),$
\item matrices
\begin{eqnarray*}
V&=&\VAR_{Y\sim g}\grad \log p(Y|\tht), \\
D&=&
\Ex_{Y\sim g} \hess  \log p(Y|\tht)\\
W &=& \frac{1}{\Z^2(\tht)} \VAR_{Y \sim h}
\left[\dfrac{\grad f(Y|\tht)}{h(Y)}-\dfrac{\grad \Z(\tht)}{\Z(\tht)}\dfrac{ f(Y|\tht)}{h(Y)}\right]
\end{eqnarray*}
exist and $D$ is negative definite,
\item function $D(\th) =  \Ex_{Y\sim g} \hess \log p(Y|\th) $ is continuous at $\tht$,
\item
$
\sup\limits_{\th \in U} | \hess \lan (\th) - D(\th)| \topr 0 , \quad n
\rightarrow \infty,
$
\item
$
\sup\limits_{\th \in U}  | \hess C_m (\th) - \hess C(\th)| \topr 0, \quad
m \rightarrow \infty,
$
\end{enumerate}
then
\begin{equation}\nonumber
\left( \frac{V}{n} + \frac{W}{m} \right)^{- \frac{1}{2}}\: D \left(\thnm -
\tht
\right) \tod \mathcal{N} (0,I), \quad n,m \rightarrow \infty.
 \end{equation}
\end{thm}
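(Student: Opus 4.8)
The plan is to follow the classical route for asymptotic normality of $M$-estimators, adapted to the two independent sources of randomness. First I would use that, by consistency (implicit in assumption 2), for $n,m$ large the maximizer $\thnm$ lies in the interior of $U$, so by assumption 1 it solves the stationarity equation $\grad\lanm(\thnm)=0$. A Taylor expansion of $\grad\lanm$ around $\tht$ then gives
\begin{equation}\nonumber
0=\grad\lanm(\tht)+A_{n,m}\left(\thnm-\tht\right),
\end{equation}
where $A_{n,m}=\hess\lanm(\tilde\th)$ is the Hessian evaluated at intermediate points on the segment joining $\thnm$ and $\tht$. Rearranging and inserting $D$ yields the decomposition
\begin{equation}\nonumber
D\left(\thnm-\tht\right)=-\grad\lanm(\tht)+\left(D-A_{n,m}\right)\left(\thnm-\tht\right),
\end{equation}
which I would multiply on the left by $\left(\frac{V}{n}+\frac{W}{m}\right)^{-1/2}$. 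The proof then reduces to three facts: $A_{n,m}\topr D$; the normalized gradient converges to $\mN(0,I)$; and the normalized remainder is negligible.

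For the Hessian I would write $\hess\lanm=\hess\lan-\hess\rem$. Since $\rem=\log C_m-\log\Z$, a direct differentiation (justified by assumption 1) expresses $\hess\rem$ through $C_m,\grad C_m,\hess C_m$ and their population counterparts; the law of large numbers together with assumption 6 gives $\hess\rem(\tilde\th)\topr 0$. Combining assumption 5 (uniform convergence of $\hess\lan$ to $D(\cdot)$), assumption 4 (continuity of $D(\cdot)$ at $\tht$) and $\tilde\th\topr\tht$ gives $\hess\lan(\tilde\th)\topr D$, hence $A_{n,m}\topr D$.

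The core of the argument is the central limit theorem for $\left(\frac{V}{n}+\frac{W}{m}\right)^{-1/2}\grad\lanm(\tht)$. I would split $\grad\lanm(\tht)=\grad\lan(\tht)-\grad\rem(\tht)$ into a term depending only on the initial sample and a term depending only on the Monte Carlo sample, so that the two are independent. The first is a centred i.i.d.\ average with covariance $V/n$. For the second I would linearize $\grad\rem(\tht)=\grad C_m(\tht)/C_m(\tht)-\grad\Z(\tht)/\Z(\tht)$ by the delta method; the leading term is a centred i.i.d.\ average of $\Z(\tht)^{-1}[\grad f(Y^k|\tht)/h(Y^k)-(\grad\Z(\tht)/\Z(\tht))f(Y^k|\tht)/h(Y^k)]$, whose covariance is exactly $W/m$. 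Writing the whole normalized gradient as a sum of $n+m$ independent centred vectors with total covariance $I$, I would pass to characteristic functions: by independence they factorize, and expanding each factor shows the log-characteristic function converges to $-\frac{1}{2}|t|^2$ for every $t$. The key point making this work for arbitrary relative growth of $n$ and $m$ is that the adaptive normalization $\left(\frac{V}{n}+\frac{W}{m}\right)^{-1/2}$ forces the individual summand norms to zero (a Lindeberg-type condition), and this uniform-in-$(n,m)$ control is where I expect the main difficulty to lie.

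Finally, for the remainder I would use positive definiteness of the normalizing matrix, which yields $\lambda_{\min}\left(\frac{V}{n}+\frac{W}{m}\right)\geq c/\min(n,m)$ for some $c>0$ and hence the operator-norm bound $\bigl\|\left(\frac{V}{n}+\frac{W}{m}\right)^{-1/2}\bigr\|=O(\sqrt{\min(n,m)})$. Together with assumption 2, $\|\thnm-\tht\|=O_p(1/\sqrt{\min(n,m)})$, and $\|D-A_{n,m}\|\topr 0$ this gives
\begin{equation}\nonumber
\left(\frac{V}{n}+\frac{W}{m}\right)^{-1/2}\left(D-A_{n,m}\right)\left(\thnm-\tht\right)\topr 0.
\end{equation}
This is precisely why assumption 2 is stated with the rate $\sqrt{\min(n,m)}$: it is matched to the blow-up of the normalizing matrix. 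An application of Slutsky's theorem then combines the normalized main term, which converges to $\mN(0,I)$, with the negligible remainder to yield the claim.
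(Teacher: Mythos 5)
Your argument is sound, but it takes a genuinely different route from the paper's. The paper never Taylor-expands the score equation: it invokes Pollard's Theorem VII.5, which turns (i) a uniform quadratic approximation of the criterion, $(n+m)\sup_{|\th|\le M/\sqrt{n+m}}|\lanm(\th)-\lanm(\tht)-\th\t\grad\lanm(\tht)-\tfrac12\th\t D\th|\topr 0$, and (ii) a CLT for $\grad\lanm(\tht)$ into the conclusion; and it tames the double limit by first assuming $n/(n+m)\to a$ (three cases $0<a<1$, $a=0$, $a=1$) and then appealing to the subsequence principle. You instead use the classical Cram\'er decomposition $D(\thnm-\tht)=-\grad\lanm(\tht)+(D-A_{n,m})(\thnm-\tht)$ together with a direct Lindeberg/characteristic-function argument valid for arbitrary relative rates of $n$ and $m$. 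Your route is more self-contained (no black-box theorem, no subsequence reduction), but it shifts the burden onto exactly the two points you identify: the Lindeberg condition must be checked uniformly in $(n,m)$, and the bound $\bigl\|\left(\frac{V}{n}+\frac{W}{m}\right)^{-1/2}\bigr\|=O(\sqrt{\min(n,m)})$ used to kill the remainder holds only when both $V$ and $W$ are positive definite (otherwise one only gets a bound of order $\sqrt{\max(n,m)}$); the paper carries the same implicit requirement in its extreme cases $a=0$ and $a=1$, where it inverts $V$ and $W$ separately, so this is not a defect relative to the published proof. The treatment of the Monte Carlo score is essentially identical in substance: your delta-method linearization of $\grad\rem(\tht)=\grad C_m(\tht)/C_m(\tht)-\grad\Z(\tht)/\Z(\tht)$, with leading term of covariance $W/m$, is the computation the paper performs exactly as a ratio and resolves via Slutsky's theorem, and both proofs reduce the control of $\hess\rem$ on shrinking neighbourhoods to uniform convergence of $C_m$, $\grad C_m$, $\hess C_m$ on $U$, the first two of which must be derived from assumption 6 by a Taylor expansion plus the pointwise law of large numbers --- a step your sketch compresses into one sentence but which is needed in full.
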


Note that 1 and 3 are rather standard regularity assumptions. Condition 2
stipulates the square root consistency of the MCML estimator. If the MC approximation 
$\lanm(\th)$ is concave (as in the example studied below), then assumption 2 is automatically fulfilled \cite{Niemiro92}. Otherwise, it can be deduced from more explicit
assumptions by adapting standard
methods from the empirical processes theory \cite{Pollard1984, vaart1998} to the
double randomness problem. For simplicity, we do not explore this topic. We just choose condition 2 as a starting point of our
argumentation (which is by itself quite complicated).

We shall show that conditions 4 - 6 are satisfied for exponential families, i.e.\ if
$$
f(y|\th)= \exp( \th\t W(y))
$$
with $W(y) = \left( W_1(y), \ldots, W_p(y)\right).$ We can easily verify
that $\hess \log p(y|\th)=  - \hess \log C(\th), $ so assumptions 4 and 5 are
obviously fulfilled. Thus, condition 6 is the last one to establish.
Function  $ \hess
C_m(\th)$ is  matrix-valued, so it is enough to  prove that for each component (that is for each $r,s=1, \ldots, p$)
\begin{equation} \label{VC}
\sup\limits_{\th \in U} \left| \left[ \hess C_m(\th) \right]_{rs} - \left[ \hess C(\th)  \right]_{rs} \right| \to_p 0, \quad m \rightarrow \infty.
\end{equation}
Consider a family of functions
\begin{equation} \label{VCfam}
\left\{ \left[ \frac{\hess f(y | \th) }{h(y)} \right]_{rs } =
\exp( \th\t W(y)) \, \frac{W_r(y) W_s(y)}{h(y) } : \th \in U
\right\}.
\end{equation}
The set $U$ is compact, so to obtain \eqref{VC} it is sufficient to
assume functions} in \eqref{VCfam}
are dominated by an integrable function (see \cite[Theorem 16(a)]{Ferguson1996},
\cite[Example 19.8]{vaart1998}), i.e. for each $r,s$ there is a
function $\eta$ such that $\Ex_{Y\sim h} \eta(Y) < \infty$ and $\big|
[ \hess f(y | \th) / h(y) ]_{rs } \big| \le \eta(y)$ for each $\th, y$.

\begin{proof}[Proof of Theorem \ref{without_cov}]
Without loss of generality we can assume that $\tht = 0.$ 
First we assume that $\frac{n}{n+m}\to a$ and consider three cases corresponding to rates at which $n$ and $m$ go to infinity:
$0<a<1$,  $a=0$ and $a=1$. Once our theorem is proved in these three special cases, standard application
of the subsequence principle shows that it is valid in general (for $n\to \infty$ and $m\to \infty$ at arbirary rates).

We begin with the case  $0<a<1$. It is well-known (see
\cite[Theorem VII.5]{Pollard1984})
that we need to prove
\begin{equation}\label{as_norm1}
\left( \frac{V}{n} + \frac{W}{m} \right)^{- \frac{1}{2}} \grad \lanm(0)
 \rightarrow_d \mathcal{N} (0,I), \quad n,m \rightarrow \infty
\end{equation}
and for every $M>0$
\begin{equation}\label{sup1}
(n+m)  \sup_{|\th|\leq \frac{M}{\sqrt{n+m}}} \left|\lanm(\th)-\lanm(0)- \th\t \grad\lanm(0)-\frac{1}{2}\th\t D\th\right| \topr 0, \quad n,m \rightarrow \infty.
\end{equation}

To obtain (\ref{as_norm1})  notice that
\begin{equation}\label{as_norm11}
\sqrt{n+m} \grad \lanm(0) = \sqrt{\frac{n+m}{n}} \: \sqrt{n} \grad \lan(0)
- \sqrt{\frac{n+m}{m}} \: \sqrt{m} \grad \rem(0)
\end{equation}
and the terms on the right hand side  in (\ref{as_norm11}) are independent.
We can calculate the gradient
\begin{equation}\nonumber
\grad\rem(0)  =\frac{\dfrac{1}{m}\sum\limits_{k=1}^{m}\left[\dfrac{\grad f(Y^k|0)}{h(Y^k)}-
                        \dfrac{\grad\Z(0)}{\Z(0)} \dfrac{f(Y^k|0)}{h(Y^k)}\right]}
                 {\dfrac{1}{m}\sum\limits_{k=1}^{m} \dfrac{f(Y^k|0)}{h(Y^k)}}.
\end{equation}
Therefore, by LLN, CLT and Slutsky's theorem
we have that $\sqrt{m} \grad \rem(0) \tod \mN (0,W) $ and
$\sqrt{n} \grad \lan(0) \tod \mN(0,V) $ which implies
\begin{equation}\nonumber
\sqrt{n+m} \grad \lanm(0) \tod \mN \left(0, V/a + W/ (1-a)\right), \quad n,m
\rightarrow \infty.
\end{equation}
Thus, we obtain (\ref{as_norm1}) since
$$\sqrt{n+m} \left(V/a + W/ (1-a)\right)^{-\frac{1}{2}}
\left(V/n + W/ m \right)^{\frac{1}{2}} \rightarrow I \quad n,m \rightarrow \infty.$$

Now we focus on (\ref{sup1}). Using Taylor expansion it can be bounded by
\begin{equation}\label{sup11}
\frac{M^2}{2}\!\! \left( \sup_{\th \in U^m_n} \left|\hess \lan(\th)-D(\th)\right| + \sup_{\th \in U_n^m}\left|D(\th)-D(0)\right|  +
 \sup_{\th \in U^m_n} \left|\hess \rem(\th) \right| \right)
\end{equation}
for $U^m_n = \{ \th : |\th| \leq \frac{M}{\sqrt{n+m}} \}.$
First two terms in (\ref{sup11}) tend to zero in probability by
assumptions 4 and 5. We prove that assumption 6 implies convergence
to zero in probability of the third term
 in (\ref{sup11}). Calculating  the second
derivative we get
\begin{equation}\nonumber
\hess \rem(\th)=\frac{  \hess C_m(\th)}{ C_m(\th)}-\frac{\grad C_m(\th)  \grad\t C_m(\th)}{C_m^2(\th)}
-\frac{\hess \Z(\th)}{\Z(\th)}+\frac{\grad \Z(\th)\grad\t \Z(\th)}{\Z^2(\th)}.
\end{equation}
Therefore
\begin{eqnarray*}
&&\sup_{\th \in U} \left|\hess \rem(\th) \right| \leq \sup_{\th \in U}
\frac{  |\hess C_m(\th)| \, |C_m(\th) - C(\th)|}{ C_m(\th) C(\th)} +
\sup_{\th \in U} \frac{  | \hess C_m(\th) - \hess C(\th)|}{ C(\th)}\\
&&+ \sup_{\th \in U}
\frac{  |\grad C_m(\th)|^2 \, |C_m^2(\th) - C^2(\th)|}{ C_m^2(\th) C^2(\th)}+
\sup_{\th \in U} \frac{  | \grad C_m(\th)  \grad\t C_m(\th)-
\grad C(\th)\grad\t C(\th)|}{ C^2(\th)}.
\end{eqnarray*}
Note that continuous functions $C(\th), |\grad C (\th)|, |\hess C (\th)|$ are bounded on the compact set $U$, in
particular function $C(\th)$ is separated from zero. Therefore, all we need is
assumption 6 and
\begin{eqnarray}\label{sup12}
\sup_{\th \in U}  |  C_m (\th) -  C(\th)| \topr 0, \quad
m \rightarrow \infty, \\ \label{sup125}
\sup_{\th \in U}  | \grad C_m (\th) -  \grad C(\th)| \topr 0, \quad
m \rightarrow \infty.
\end{eqnarray}
However, uniform convergence in (\ref{sup12}) and (\ref{sup125}) easily follows from Taylor expansion, LLN and assumption 6. For instance, for some $\th' \in (0,\th)$
\begin{equation} \nonumber
\grad C_m(\th)-\grad C( \th) =\grad C_m(0) - \grad C( 0) +
\left[ \hess C_m (\th') - \hess C(\th')\right] \th,
\end{equation}
so
\begin{equation} \nonumber
\sup_{\th \in U } |\grad C_m(\th)-\grad C( \th) | \leq|\grad C_m(0) - \grad C( 0)| +
 \delta \, \sup_{\th \in U } \left| \hess C_m (\th) - \hess C(\th)\right|.
\end{equation}
Thus, the proof in the case $0<a<1$ is finished. For $a=0$ or $a=1$ we
proceed similarly. For example, if $a=0$, then we should prove
an analog of (\ref{sup1}), namely for every $M>0$
\begin{equation}
\label{a0}
n  \sup_{|\th|\leq \frac{M}{\sqrt{n}}} \left|\lanm(\th)-\lanm(0)- \th\t \grad\lanm(0)-\frac{1}{2}\th\t D\th\right| \topr 0, \quad
n,m \rightarrow \infty.
\end{equation}
Argumentation is almost the same as in the proof of (\ref{sup1}). To obtain
(\ref{as_norm1})  in this case note that
\begin{equation}\label{as_norma0}
\sqrt{n} \grad \lanm(0) =  \sqrt{n} \grad \lan(0)
- \sqrt{\frac{n}{m}} \: \sqrt{m} \grad \rem(0).
\end{equation}
Therefore, expression (\ref{as_norma0}) tends in distribution to $\mathcal{N} (0, V) .$
Moreover,
$$\sqrt{n} V^{-\frac{1}{2}}
\left(V/n + W/ m \right)^{\frac{1}{2}} \rightarrow I, \quad n,m \rightarrow \infty.$$

\end{proof}

\section{Model with covariates}

Let us return to the general case and state the main theorem of the paper. We need new notation:
\begin{eqnarray*}
\phi(y|x) &=&\left[\dfrac{\grad f(y|x,\tht)}{h(y)}-\dfrac{\grad \Z(x,\tht)}{\Z(x,\tht)}\dfrac{ f(y|x,\tht)}{h(y)}\right] \frac{1}{\Z(x,\tht)} \;,\\
\rem_n(\th) &=& \frac{1}{n} \sum_{i=1}^n \left[ \log \frac{1}{m} \sum_{k=1}^m
\frac{f(Y^k|X_i,\th)}{ h(Y^k)} - \log \Z (X_i, \th)
\right]\;.
\end{eqnarray*}
Then $\lanm (\th) = \lan(\th) - \rem_n (\th).$

\begin{thm}\label{with_cov}
For some $\delta >0$ let $U= \{\th : |\th - \tht| \leq \delta\} $ be a
neighbourhood of $\tht.$ Suppose the following assumptions are satisfied:
\begin{enumerate}
\item second partial derivatives of $f(y |x,\th)$ with respect to $\th$ exist and are continuous for all $y$ and $x$, and may be passed under the integral sign in
$\int   f(y|x, \th) dy$ for fixed $x$,
\item $\sqrt{\min(n,m)} \left( \thnm - \tht \right) = O_p (1),$
\item matrices
\begin{eqnarray*}
V&=&\VAR_{(Y,X)\sim g}\grad \log p(Y|X,\tht), \\
D&=&
\Ex_{(Y,X)\sim g} \hess  \log p(Y|X,\tht)
\end{eqnarray*}
and the expectation $\tilde{W} = \Ex_{Y \sim h, X\sim g}
|\phi(Y|X)|^2$
exist and $D$ is negative definite,
\item function $D(\th) = \Ex_{(Y,X)\sim g} \hess  \log p(Y|X,\theta) $ is continuous at $\tht$,
\item $\sup_{\th \in U} | \hess \lan (\th) - D(\th)| \rightarrow_P 0 , \quad n
\rightarrow \infty,$
\item
\begin{enumerate}
 \item  $\sup_{x \in \mX} |C_m (x,\tht) - C(x,\tht)|  \topr 0, \quad
m \rightarrow \infty,$
 \item  $\sup_{x \in \mX} |\nabla C_m (x,\tht) - \nabla C(x,\tht)| \topr 0, \quad
m \rightarrow \infty,$
\item  $\sup\limits_{\th \in U ,x \in \mX} |\hess C_m (x,\th) - \hess C(x,\th)| \topr 0, \quad
m \rightarrow \infty,$
\end{enumerate}
\item there exist constants $\alpha > 0,$ $K > 0$ such that for each $x \in \mX$ and $\th \in U$
$$
 \alpha \leq C(x,\th) \leq K, \quad
| \grad C(x,\th)| \leq K, \quad
| \hess C(x,\th)| \leq K.
$$
\end{enumerate}
Then matrix
\begin{equation}\nonumber
W=\VAR_{Y \sim h} \: \Ex_{X\sim g}
\; \phi(Y|X)
\end{equation}
is finite and
\begin{equation}\nonumber
\left( \frac{V}{n} + \frac{W}{m} \right)^{- \frac{1}{2}}\: D \left(\thnm -
\tht
\right) \tod \mathcal{N} (0,I), \quad n,m \rightarrow \infty.
 \end{equation}
\end{thm}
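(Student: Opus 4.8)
The plan is to retrace the proof of Theorem~\ref{without_cov}, modifying each step to absorb the covariate average and the fact that the single Monte Carlo sample $Y^1,\dots,Y^m$ is reused for every observation. As there, I would set $\tht=0$ and, by the subsequence principle, reduce to the three regimes $n/(n+m)\to a$ with $0<a<1$, $a=0$ and $a=1$, treating $0<a<1$ in detail. By \cite[Theorem VII.5]{Pollard1984} it then suffices, exactly as for \eqref{as_norm1} and \eqref{sup1}, to establish (i)~$(V/n+W/m)^{-1/2}\grad\lanm(0)\tod\mN(0,I)$ and (ii)~the uniform quadratic approximation of $\lanm$ on the balls $U^m_n=\{\th:|\th|\le M/\sqrt{n+m}\}$. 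Before this I would record that $W$ is finite: by the conditional Jensen inequality $|\Ex_{X\sim g}\phi(Y|X)|^2\le\Ex_{X\sim g}|\phi(Y|X)|^2$, so that $W=\VAR_{Y\sim h}\Ex_{X\sim g}\phi(Y|X)\le\Ex_{Y\sim h,X\sim g}|\phi(Y|X)|^2=\tilde W<\infty$ by assumption~3.

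For (i) I would write $\sqrt{n+m}\,\grad\lanm(0)=\sqrt{\frac{n+m}{n}}\,\sqrt n\,\grad\lan(0)-\sqrt{\frac{n+m}{m}}\,\sqrt m\,\grad\rem_n(0)$, whose two summands are independent. The data term satisfies $\sqrt n\,\grad\lan(0)\tod\mN(0,V)$ as in the covariate-free case. For the Monte Carlo term, write $r^m(\th;x)=\log C_m(x,\th)-\log\Z(x,\th)$, so that $\rem_n(\th)=\frac1n\sum_i r^m(\th;X_i)$; a computation gives $\grad r^m(0;x)=\frac{\Z(x,0)}{C_m(x,0)}\,\frac1m\sum_k\phi(Y^k|x)$, and, using $\Ex_{Y\sim h}\phi(Y|x)=0$ (a consequence of assumption~1), averaging over $i$ yields the decomposition
\[
\sqrt m\,\grad\rem_n(0)=\frac{1}{\sqrt m}\sum_{k=1}^m\Ex_{X\sim g}\phi(Y^k|X)+\frac{1}{\sqrt m}\sum_{k=1}^m\Big[\frac1n\sum_{i=1}^n\phi(Y^k|X_i)-\Ex_{X\sim g}\phi(Y^k|X)\Big]+R_{n,m},
\]
where $R_{n,m}$ collects the error of replacing $C_m(X_i,0)$ by $\Z(X_i,0)$ in the denominator. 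The first term is a clean central limit theorem in $m$ over the i.i.d.\ draws $Y^k\sim h$ and tends to $\mN(0,W)$; the bracketed term has conditional variance (given the covariates) equal to $\VAR_{Y\sim h}\big[\frac1n\sum_i\phi(Y|X_i)-\Ex_{X\sim g}\phi(Y|X)\big]$, which does not depend on $m$ and tends to $0$ as $n\to\infty$ (by the law of large numbers over the covariates and assumption~3), so the term is $o_p(1)$; and $R_{n,m}\topr0$ by assumptions~6(a) and~7. Together with the data term and Slutsky's lemma this gives (i).

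For (ii) a Taylor expansion bounds the relevant quantity, as in \eqref{sup11}, by a constant times $\sup_{\th\in U^m_n}|\hess\lan(\th)-D(\th)|+\sup_{\th\in U^m_n}|D(\th)-D(0)|+\sup_{\th\in U^m_n}|\hess\rem_n(\th)|$; the first two terms vanish by assumptions~5 and~4. For the third I would use $\hess\rem_n(\th)=\frac1n\sum_i\hess r^m(\th;X_i)$, where each $\hess r^m(\th;x)$ is the same rational function of $C_m,\grad C_m,\hess C_m$ and $\Z,\grad\Z,\hess\Z$ as in the covariate-free proof, evaluated at $x$. Assumption~7 bounds these quantities and separates $\Z(x,\th)$ from zero uniformly in $x\in\mX$ and $\th\in U$, while assumption~6(c), combined with the Taylor upgrading argument of \eqref{sup12}--\eqref{sup125} carried out uniformly in $x$ (starting from 6(a) and 6(b) at $\tht$), gives the uniform convergences $\sup_{\th\in U,\,x\in\mX}|C_m-\Z|\topr0$ and likewise for $\grad C_m-\grad\Z$ and $\hess C_m-\hess\Z$. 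These force $\sup_{\th\in U,\,x\in\mX}|\hess r^m(\th;x)|\topr0$, hence $\sup_{\th\in U^m_n}|\hess\rem_n(\th)|\topr0$. The regimes $a=0$ and $a=1$ follow by the same substitutions as at the end of the proof of Theorem~\ref{without_cov}.

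The main obstacle will be the Monte Carlo central limit theorem of the second paragraph. Because one sample serves all $n$ observations, the covariate average is trapped inside the variance, producing $W=\VAR_{Y\sim h}\Ex_{X\sim g}\phi(Y|X)$ rather than $\Ex_{X\sim g}\VAR_{Y\sim h}\phi(Y|X)$; justifying this rigorously means controlling a genuine double limit --- a law of large numbers in $n$ nested inside a central limit theorem in $m$ --- which the above decomposition disentangles by isolating a pure-$m$ CLT term and a fluctuation term whose conditional variance vanishes in $n$ independently of $m$. It is precisely here that the uniform-in-$x$ hypotheses~6 and~7 earn their place.
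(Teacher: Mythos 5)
Your proposal follows essentially the same route as the paper: the same reduction via the subsequence principle and Pollard's Theorem VII.5, the same Taylor/uniform-convergence treatment of the quadratic remainder using assumptions 6 and 7, and the same three-way decomposition of $\sqrt{m}\,\grad \rem_n(0)$ into the pure-$m$ CLT term $\frac{1}{\sqrt{m}}\sum_{k}\Ex_{X\sim g}\phi(Y^k|X)$, the covariate-fluctuation term (which the paper likewise kills by an $L^2$ computation producing a factor $1/n$), and the remainder from replacing $C_m(X_i,0)$ by $\Z(X_i,0)$. The only detail worth adding is that controlling that remainder uses not just assumptions 6(a) and 7 but also a Cauchy--Schwarz/Markov bound on $\frac{1}{n}\sum_{i}\bigl|\frac{1}{\sqrt{m}}\sum_{k}\phi(Y^k|X_i)\bigr|^2$ via $\tilde{W}<\infty$ from assumption 3.
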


We discuss assumptions in Theorem \ref{with_cov} for functions $f(y|x,\th)$ belonging 
to the exponential family at the end of this section.

\begin{proof}
Without loss of generality we can assume that $\tht = 0.$

Similarly to the proof of Theorem \ref{without_cov} we consider three cases: $0<a<1$,  $a=0$ and $a=1$, where  $\frac{n}{n+m}\to a$.
Finally, we complete the proof by using the subsequence principle.

We focus on the case $0<a<1$, because for  $a=0$ or $a=1$ we proceed in a similar
way (cf.\ the proof of Theorem \ref{without_cov}).  It is well-known (see
\cite[Theorem VII.5]{Pollard1984})
that we need to prove that for every $M>0$
\begin{equation}\label{sup2}
(n+m)  \sup_{|\th|\leq \frac{M}{\sqrt{n+m}}} \left|\lanm(\th)-\lanm(0)- \th\t \grad\lanm(0)-\frac{1}{2}\th\t D\th\right| \topr 0, \quad n,m \rightarrow \infty,
\end{equation}
and
\begin{equation}
\label{asnorm2}
\left( \frac{V}{n} + \frac{W}{m} \right)^{- \frac{1}{2}} \grad \lanm(0)
 \rightarrow_d \mathcal{N} (0,I), \quad n,m \rightarrow \infty.
\end{equation}
We start with \eqref{sup2}.
Using Taylor expansion the left hand side of (\ref{sup2}) can be bounded by
\begin{equation}\label{sup21}
\frac{M^2}{2}\!\! \left( \sup_{\th \in U^m_n} \left|\hess \lan(\th)-D(\th)\right| + \sup_{\th \in U_n^m}\left|D(\th)-D(0)\right|  +
 \sup_{\th \in U^m_n} \left|\hess \rem_n(\th) \right| \right)
\end{equation}
for $U^m_n = \{ \th : |\th| \leq \frac{M}{\sqrt{n+m}} \}.$
First two terms in (\ref{sup21}) tend to zero in probability by
assumptions 4 and 5. We prove that assumptions 6 and 7 imply convergence
to zero in probability of the third term
 in (\ref{sup21}). Calculating  the second
derivative of $\rem_n(\th) $ we get
\begin{eqnarray*}\nonumber
\hess \rem_n(\th)&=&\frac{1}{n} \sum_{i=1}^n \left[ \frac{  \hess C_m(X_i,\th)}{ C_m(X_i,\th)}-\frac{\grad C_m(X_i,\th)  \grad\t C_m(X_i,\th)}{C_m^2(X_i,\th)} \right. \\
&-& \left. \frac{\hess \Z(X_i,\th)}{\Z(X_i, \th)}+\frac{\grad \Z(X_i, \th)\grad\t \Z(X_i, \th)}{\Z^2(X_i, \th) }\right].
\end{eqnarray*}
Therefore
\begin{eqnarray}
\label{sup22}
&&\sup_{\th \in U}  \left|\hess \rem_n(\th) \right| \leq \sup_{\th \in U, x \in \mX}
\frac{  |\hess C_m(x,\th)| \, |C_m(x,\th) - C(x,\th)|}{ C_m(x,\th) C(x,\th)} \\
&+&
\sup_{\th \in U, x \in \mX} \frac{  | \hess C_m(x, \th) - \hess C(x, \th)|}{ C(x, \th)}
\nonumber \\
&+& \sup_{\th \in U, x \in \mX}
\frac{  |\grad C_m(x, \th)|^2 \, |C_m^2(x, \th) - C^2(x, \th)|}{ C_m^2(x,\th) C^2(x,\th) }
\nonumber \\
&+&
\sup_{\th \in U, x \in \mX} \frac{  | \grad C_m(x,\th)  \grad\t C_m(x, \th)-
\grad C(x,\th)\grad\t C(x, \th)|}{ C^2(x,\th)} . \nonumber
\end{eqnarray}
The convergence in assumptions 6(a) and 6(b) can be strengthened to be uniform over $\th \in U$
in the similar way as in the proof of Theorem \ref{without_cov}. Using these arguments and assumption 7
we obtain that for arbitrary $\eta >0$ and sufficiently large $m$ 
with probability at least $ 1 - \eta$
for each $x \in \mX, \theta \in U$
\[
\alpha/2\leq C_m(x,\th)\leq K+\alpha/2.
\]
Hence, every term on the right side of (\ref{sup22}) tends to zero in probability.

The last step is proving (\ref{asnorm2}).
 First, notice that
\begin{eqnarray}
\sqrt{n+m} \grad \lanm(0) &=& \sqrt{\frac{n+m}{n}} \: \sqrt{n} \grad \lan(0)
- \sqrt{\frac{n+m}{m}} \: \sqrt{m} \grad \renm(0) \nonumber \\ \label{asymp1}
&=& \left[ \sqrt{\frac{n+m}{n}} \: \sqrt{n} \grad \lan(0) - \sqrt{\frac{n+m}{m}} \frac{1}{ \sqrt{m}} \sum_{k=1}^m \bar{\phi} (Y^k)
\right]  \\ \label{asymp111}
&+& \sqrt{\frac{n+m}{m}} \left[ \frac{1}{ \sqrt{m}} \sum_{k=1}^m \bar{\phi} (Y^k)
-\sqrt{m} \grad \renm(0) \right],
\end{eqnarray}
where $\bar{\phi} (y) = \Ex_{X \sim g} \phi(y|X).$
By CLT, the expression  (\ref{asymp1}) tends in distribution to
$\mN(0, V/a + W/(1-a)), $ since the Monte Carlo sample is independent of the observation.
To show that the term (\ref{asymp111}) tends to zero in probability, we prove that
\begin{equation}\label{asymp12}
\sqrt{m} \grad \renm(0) - \frac{1}{n}\sum_{i=1}^n \frac{1}{ \sqrt{m}} \sum_{k=1}^m
\phi(Y^k| X_i)
\end{equation}
and
\begin{equation}\label{asymp13}
\frac{1}{n}\sum_{i=1}^n \frac{1}{ \sqrt{m}} \sum_{k=1}^m
\phi(Y^k| X_i) -\frac{1}{ \sqrt{m}} \sum_{k=1}^m \bar{\phi} (Y^k)
\end{equation}
tends to zero in probability. We start with (\ref{asymp12}) and calculate
\begin{equation}\nonumber
\grad\rem_n(0)
               =\frac{1}{n}\sum_{i=1}^{n }\frac{\dfrac{1}{m}\sum\limits_{k=1}^{m}
\phi(Y^k |X_i) \, \Z (X_i,0)}
                 {C_m(X_i,0)}\;.
\end{equation}
Therefore, by Cauchy-Schwarz inequality, expression (\ref{asymp12}) is
bounded by
\begin{equation}
\label{asymp121}
\sqrt{\frac{1}{n} \sum_{i=1}^n \frac{[C_m(X_i,0) - C(X_i,0)]^2}{C^2_m(X_i,0)}}\:
\sqrt{\frac{1}{n} \sum_{i=1}^n \left| \frac{1}{ \sqrt{m}} \sum_{k=1}^m
\phi(Y^k| X_i) \right|^2}.
\end{equation}
By assumptions 6(a) and 7 we again obtain that for arbitrary
$\varepsilon>0, \eta >0$ and sufficiently large $m$ with probability at least $1-\eta$
for every $x \in \mX$
\begin{equation}\nonumber
|\Z_m (x,0) -\Z(x,0)|\leq \varepsilon \quad {\rm and } \quad C_m(x,0) 
\geq  \alpha/2.
\end{equation}

Therefore, the term under the first square root in (\ref{asymp121}) tends in probability to zero, because with probability at least $1-\eta$
\begin{equation}\nonumber
\frac{1}{n} \sum_{i=1}^n \frac{[C_m(X_i,0) - C(X_i,0)]^2}{C^2_m(X_i,0)} \leq \sup_{x \in \mX} \frac{[C_m(x,0) - C(x,0)]^2}{C^2_m(x,0)} \leq \frac{4 \varepsilon^2}{\alpha ^2} \:.
\end{equation}
Using Markov's inequality and assumption 3  the second square root is bounded in probability, since
\begin{eqnarray*}
\Ex_{X_i \sim g,Y^k \sim h} \frac{1}{n} \sum_{i=1}^n \left| \frac{1}{ \sqrt{m}} \sum_{k=1}^m
\phi(Y^k| X_i) \right|^2 &=& \Ex_{X \sim g ,Y^k \sim h}  \left| \frac{1}{ \sqrt{m}} \sum_{k=1}^m
\phi(Y^k| X) \right|^2 \\
&=& \Ex_{X \sim g ,Y \sim h}  \left|
\phi(Y| X) \right|^2 = \tilde W < \infty,
\end{eqnarray*}
where we use the fact that $\Ex_{Y \sim h} \phi (Y|x)=0$ for fixed $x.$
Now consider (\ref{asymp13}). Change the order of summation and notice that
\begin{eqnarray*}
&&\Ex_{X_i \sim g ,Y^k \sim h} \left|\frac{1}{ \sqrt{m}} \sum_{k=1}^m  \left[\frac{1}{n}\sum_{i=1}^n
\phi(Y^k| X_i)   -  \bar{\phi}(Y^k)\right]\right|^2 \\
&=& \Ex_{X_i \sim g,Y \sim h} \left|\frac{1}{n}\sum_{i=1}^n
\phi(Y| X_i)  - \bar{\phi}(Y)\right|^2
=\frac{1}{n} \Ex_{X \sim g ,Y \sim h} \left|\phi(Y| X)  - \bar{\phi}(Y)\right|^2,
\end{eqnarray*}
so (\ref{asymp13}) tends to zero in $L^2$, hence, in probability.
\end{proof}

Finally, we discuss assumptions in Theorem \ref{with_cov}. 
Note that conditions 1-3  are similar to their analogs
in Theorem \ref{without_cov}. Therefore, we
briefly comment on the others. Consider the exponential family
$$
f(y|x, \th)= \exp( \th\t W(y,x))
$$
where $W(y,x) = \left( W_1(y,x), \ldots, W_p(y,x)\right),$ the set $ \mX$ is compact and  the function $W(y,x)$ is continuous with respect to the variable $x$. 
For simplicity we restrict attention to finite (but very large) space $\mathcal{Y}$, so that
$$
\Z(x,\th) = \sum_{y \in \mathcal{Y}} \exp( \th\t W(y,x)).
$$
The  autologistic model \cite{HuWu1998} that is very popular in spatial statistics belongs to this family. We can calculate that
\begin{eqnarray*}
\grad \Z(x,\th) &=& \sum_{y \in \mathcal{Y}} \exp( \th\t W(y,x)) W(y,x)\\
\hess \Z(x,\th) &=& \sum_{y \in \mathcal{Y}} \exp( \th\t W(y,x)) W(y,x) W^T(y,x)\\
\hess \log p(y|x,\th)&=& -\hess \log \Z (x,\th)= -\frac{\hess \Z (x,\th)}{\Z(x,\th)}
+ \frac{\grad \Z (x,\th) \grad ^T \Z (x,\th)}{\Z ^2(x,\th)}\:.
\end{eqnarray*}
Since the function $W(y,x)$ is continuous with respect to $x$ functions $\Z(x,\th),$
$\grad \Z(x,\th),\hess \Z(x,\th)$ are continuous with respect to both variables on the compact set $\mathcal{X} \times U,$
therefore assumption 7 is satisfied. Besides, the function $\hess \log p(y|x,\th)$ is also continuous that implies condition 4.  
The uniform convergence in assumption 5 and 6 follows 
from \cite[Theorem 16(a)]{Ferguson1996} or \cite[Example 19.8]{vaart1998} if we again use compactness of sets $\mathcal{X}, U $ and continuity of considered functions.


\begin{thebibliography}{8}

\bibitem{Besag1974} Besag J. (1974). Spatial interaction and the statistical analysis of lattice systems. \textit{J.R. Statist. Soc. B}, 36, 192--236.

\medskip


\bibitem{Cappe2002} Cappe O., Douc R., Moulines E. (2002). On the convergence of the
Monte Carlo maximum likelihhod method for latent lariable models,
\textit{Scand. J. of Stat.}, 29, 615--635.

\medskip


\bibitem{DempsterLairdRubin1977} Dempster A.P., Laird  N.M. and  Rubin D.B. (1977).
 Likelihood from Incomplete Data via the EM Algorithm. \textit{J.R. Statist. Soc. B}, 39, 1--38.

\medskip


\bibitem{Ferguson1996} Ferguson T.S. (1996). \textit{A course in large sample theory}, Chapman
and Hall, London.

\medskip


\bibitem{FortMoulines2003} Fort G. and Moulines, E. (2003). Convergence of the Monte Carlo EM for curved exponential families.
 \textit{Ann. Statist.} 31, 1033--1391.
\medskip


\bibitem{GeyerThom1992} Geyer C.J. and Thompson E.A. (1992). Constrained Monte Carlo maximum
likelihood for dependent data. \textit{J.R. Statist. Soc. B}, 54, 657--699.

\medskip

\bibitem{Geyer1994} Geyer C.J. (1994). On the convergence of Monte Carlo maximum likelihood calculations, \textit{J. R. Statist. Soc. B}, 56, 261--274.

\medskip

\bibitem{GeyerSung2007} Geyer C. J., Sung Y.J. (2007). Monte Carlo likelihood inference for missing data models, \textit{Ann.
Statist.}, 35, 990--1011.

\medskip

\bibitem{HuWu1998} Huffer F.W., Wu H. (1998). Markov chain Monte Carlo for autologistic
regression models with application to the distribution of plant species. \textit{Biometrics}, 54, 509--524.


\bibitem{LevineCasella2001} Levine R.A. and Casella G. (2001). Implementations of the Monte Carlo EM algorithm.
\textit{J. Computational and Graphical Statistics},  10,  422--439.
 \medskip


\bibitem{MNPR2014b} Miasojedow, B., Niemiro, W., Palczewski, J. and Rejchel, W. (2014) Adaptive MCML. \textit{Preprint}.
\medskip

\bibitem{MPRB2006} M\o ller B.J., Pettitt A.N., Reeves R. and Berthelsen, K.K. (2006).  An efficient Markov chain Monte Carlo method for
distributions with intractable normalising constants.
 \textit{Biometrika},  93, 451--458.
 
\medskip

\bibitem{Niemiro92}  Niemiro, W. (1992). Asymptotics for M-estimators defined by convex minimization.
\textit{Ann. Statist.}, 20, 1514--1533.

\medskip

\bibitem{Pollard1984} Pollard D. (1984). \textit{Convergence of stochastic processes}, Springer,
New York.

\medskip

\bibitem{vaart1998} van der Vaart A.W. (1998). \textit{Asymptotic Statistics}.
Cambridge University Press.

\medskip


\bibitem{WeiTanner1990} Wei G.C.G. and Tanner M.A. (1990). A Monte Carlo imlementation of the EM algorithm and the poor man's
data augumentation algorithms. \textit{J. Am. Statist. Assoc.} 85, 699--704.

\bibitem{WuHu1997}
Wu, H. and Huffer, F. W. (1997). Modeling the distribution of plant species using the autologistic
regression model. \textit{Environmental and Ecological Statistics} 4, 49--64.

\bibitem{Imput2010} Zalewska M., Niemiro W. and Samoli\'nski B. (2010). MCMC imputation in autologistic model.
\textit{Monte Carlo Methods Appl.} 16, 421--438.


\end{thebibliography}
\end{document}